\newtheorem{theorem}{Theorem}
\newtheorem{lemma}{Lemma}
\newtheorem{corollary}{Corollary}
\newtheorem{definition}{Definition}
\newtheorem{assumption}{Assumption}
\title{Sharp Transitions and Systemic Risk in Sparse Financial Networks\\
}
\author{Riley James Bendel\\Independent Researcher\\\texttt{rileyjbendel@protonmail.com}}
\date{2026}
\begin{document}
\maketitle

\begin{abstract}
We study contagion and systemic risk in sparse financial networks with balance-sheet interactions on a directed random graph. Each institution has homogeneous liabilities and equity, and exposures along outgoing edges are split equally across counterparties. A linear fraction of institutions have zero out-degree in sparse digraphs; we adopt an external-liability convention that makes the exposure mapping well-defined without altering propagation. We isolate a single-hit transmission mechanism and encode it by a sender-truncated subgraph $G_{sh}$. We define adversarial and random systemic events with shock size $k_n=\lceil c\log n\rceil$ and systemic fraction $\varepsilon n$. In the subcritical regime $\rho_{out}<1$, we prove that maximal forward reachability in $G_{sh}$ is $O(\log n)$ whp, yielding $O((\log n)^2)$ cascades from shocks of size $k_n$. For random shocks, we give an explicit fan-in (multi-hit) accumulation bound, showing that multi-hit defaults are negligible whp when the explored default set is polylogarithmic. In the supercritical regime, we give an exact distributional representation of $G_{sh}$ as an i.i.d.-outdegree random digraph with uniform destinations, placing it directly within the scope of the strong-giant/bow-tie theorem of Penrose (2014). We derive the resulting implication for random-shock systemic events. Finally, we explain why sharp-threshold machinery does not directly apply: systemic-event properties need not be monotone in the edge set because adding outgoing edges reduces per-edge exposure.
\end{abstract}

% -------------------------
% Proof Roadmap (Addition 1)
% -------------------------
\paragraph{Proof roadmap.}
The paper proceeds in four steps. First, the balance-sheet model and cascade dynamics are fixed, including a convention that makes exposures well-defined for zero out-degree institutions without affecting propagation. Second, a single-hit transmission mechanism is isolated and encoded by the sender-truncated graph $G_{sh}$, allowing contagion to be studied via forward reachability. Third, in the subcritical regime $\rho_{out}<1$, forward exploration in $G_{sh}$ is controlled by a subcritical branching process, and a deferred-decisions argument shows that multi-hit accumulation is negligible when the explored set is polylogarithmic. Finally, in the supercritical regime $\rho_{out}>1$, the distribution of $G_{sh}$ is identified with an i.i.d.-outdegree random digraph, placing it within the scope of existing strong-giant results and yielding the random-shock systemic threshold.

\section{Model and Definitions}

\subsection{Network}
Let $[n]:=\{1,\dots,n\}$. Let $G\sim G(n,\lambda/n)$ be a directed Erd\H{o}s--R\'enyi graph: for each ordered pair $(u,v)$ with $u\neq v$, $(u\to v)$ is present independently with probability $\lambda/n$.

\subsection{Balance-sheet primitives}

\paragraph{Liabilities.}
Each institution has total nominal liabilities $L>0$.

\paragraph{Leverage and equity.}
Fix $C>1$ and define
\begin{equation}\label{eq:equity}
E := \frac{L}{C-1}.
\end{equation}

\paragraph{Recovery.}
We assume zero recovery.

\subsection{Degree-zero closure}

\begin{assumption}[External liabilities for $d^{out}=0$]\label{ass:external}
If $d^{out}_G(u)=0$, then $u$ has no interbank out-exposures; its liabilities are owed to an external sector.
\end{assumption}

\begin{lemma}[Prevalence of $d^{out}=0$]\label{lem:d0_prevalence}
For each fixed $u$, $d^{out}_G(u)\Rightarrow \mathrm{Pois}(\lambda)$ and
\[
\mathbb{P}(d^{out}_G(u)=0)\to e^{-\lambda}.
\]
\end{lemma}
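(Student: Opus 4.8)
The plan is to recognize that the out-degree of a fixed vertex is an explicit binomial random variable and then invoke the classical Poisson limit theorem. Concretely, for a fixed $u$ the candidate out-edges $(u\to v)$ range over the $n-1$ ordered pairs with $v\neq u$, and by the definition of $G(n,\lambda/n)$ these are present independently, each with probability $\lambda/n$. Hence
\[
d^{out}_G(u)\sim \mathrm{Binomial}\!\left(n-1,\tfrac{\lambda}{n}\right),
\]
with mean $(n-1)\lambda/n=\lambda(1-1/n)\to\lambda$. This reduction is the entire content of the first claim up to a standard limiting argument.

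For the distributional convergence I would compute the probability generating function of this binomial and let $n\to\infty$: for $s\in[0,1]$,
\[
\mathbb{E}\,s^{\,d^{out}_G(u)}=\left(1-\tfrac{\lambda}{n}+\tfrac{\lambda}{n}s\right)^{n-1}=\left(1+\tfrac{\lambda(s-1)}{n}\right)^{n-1}\to e^{\lambda(s-1)},
\]
which is exactly the generating function of $\mathrm{Pois}(\lambda)$. Since pointwise convergence of generating functions on $[0,1]$ to a limiting generating function implies convergence in distribution for integer-valued random variables (equivalently, one invokes the continuity theorem for characteristic functions, or simply cites the ``law of rare events''), this yields $d^{out}_G(u)\Rightarrow\mathrm{Pois}(\lambda)$.

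The statement about $d^{out}=0$ then follows either as the special case $s=0$ of the display above, or by direct computation:
\[
\mathbb{P}(d^{out}_G(u)=0)=\left(1-\tfrac{\lambda}{n}\right)^{n-1}\to e^{-\lambda},
\]
using $(1-\lambda/n)^{n}\to e^{-\lambda}$ together with $(1-\lambda/n)^{-1}\to 1$.

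There is no substantial obstacle in this lemma; it is a textbook Poisson-approximation fact. The only points deserving any care are the bookkeeping distinction between the exponents $n-1$ and $n$ (immaterial in the limit, since the correction factor tends to $1$), and, if one wants full rigor for the weak-convergence claim rather than merely the $\{0\}$-probability, the appeal to the continuity theorem to upgrade pointwise convergence of generating functions to convergence in distribution.
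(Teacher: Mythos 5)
Your proof is correct; the paper itself gives no proof of this lemma (it is stated as a standard fact), and your argument — identifying $d^{out}_G(u)\sim\mathrm{Bin}(n-1,\lambda/n)$ from the independent edge indicators and passing to the Poisson limit via convergence of probability generating functions, with the $s=0$ case giving the $e^{-\lambda}$ claim — is exactly the textbook justification the paper implicitly relies on.
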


\subsection{Interbank exposures}
If $d^{out}_G(u)\ge 1$, each outgoing edge carries exposure
\begin{equation}\label{eq:weight}
w_{u\to v} := \frac{L}{d^{out}_G(u)}.
\end{equation}

\subsection{Cascade dynamics}
Given $S_0\subset[n]$, define $D_0:=S_0$ and iterate
\begin{equation}\label{eq:cascade}
D_{t+1} := D_t \cup \left\{ v\notin D_t:\ \sum_{u\in D_t:(u\to v)\in E(G)} w_{u\to v}\ge E\right\}.
\end{equation}

\begin{definition}[Terminal default set]\label{def:Dinf}
$D_\infty(S_0):=\bigcup_{t\ge 0} D_t$.
\end{definition}

\section{Single-Hit Mechanism and $G_{sh}$}

\begin{definition}[Single-hit cutoff]\label{def:dstar}
Define
\begin{equation}\label{eq:dstar}
d^\star(C):=\max\left\{d\in\mathbb{N}:\ d\ge 1,\ \frac{L}{d}\ge E\right\} =\left\lfloor\frac{L}{E}\right\rfloor.
\end{equation}
A node is \emph{active} if $d^{out}_G(u)\le d^\star(C)$.
\end{definition}

\begin{definition}[Single-hit graph]\label{def:Gsh}
$(u\to v)\in E(G_{sh})$ iff $(u\to v)\in E(G)$ and $u$ is active.
\end{definition}

% -------------------------
% Intuition for Single-Hit (Addition 2)
% -------------------------
\paragraph{Intuition (single-hit reduction).}
The sender-truncated graph $G_{sh}$ isolates defaults that can be caused by a single counterparty failure. When a sender has sufficiently many outgoing obligations, each individual exposure is too small to trigger default on its own, and such edges cannot participate in single-hit contagion. Retaining only edges from active senders therefore captures exactly the part of the network along which one-step propagation is possible, without altering the underlying balance-sheet dynamics.

\section{Branching Parameters}
Let $D\sim\mathrm{Pois}(\lambda)$ and define
\[
X:=D\,\mathbf 1\{D\le d^\star(C)\}.
\]

\begin{definition}[Branching mean]\label{def:rhoout}
\[
\rho_{out}:=\mathbb{E}[X]=\lambda\,\mathbb{P}(D\le d^\star(C)-1).
\]
\end{definition}

\section{Systemic Events}
Fix $\varepsilon\in(0,1)$ and $k_n=\lceil c\log n\rceil$.

\begin{definition}[Random-shock systemic event]\label{def:Fn_rand}
Let $S_0$ be uniform among subsets of $[n]$ of size $k_n$, independent of $G$. Define
\[
\mathcal F_n^{rand}:=\{|D_\infty(S_0)|\ge \varepsilon n\}.
\]
\end{definition}

\section{Subcritical Regime}

\begin{lemma}[Subcritical forward reachability is polylogarithmic]\label{lem:subcrit_reach_polylog}
Assume $\rho_{out}<1$. Fix $c>0$ and let $S_0$ be uniform among subsets of $[n]$ of size $k_n=\lceil c\log n\rceil$, independent of $G$. Then for every fixed $M>0$, with $s:=M(\log n)^2$,
\[
\mathbb{P}\Big(|\mathrm{Reach}^+(S_0;G_{sh})|>s\Big)\to 0.
\]
\end{lemma}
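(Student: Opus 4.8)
The plan is to control the forward exploration of $G_{sh}$ from the seed set $S_0$ by a subcritical Galton--Watson process and then convert the exponential tail of its total progeny into the stated bound through a random-walk Chernoff estimate. First I would run a breadth-first forward exploration of $\mathrm{Reach}^+(S_0;G_{sh})$ with deferred edge revelation, processing the $k_n$ seeds and then their discovered descendants one at a time. The structural point is that in directed $G(n,\lambda/n)$ the out-edges of a vertex $w$ are examined exactly once, when $w$ is processed, and are disjoint as ordered pairs from the in-edge revealed when $w$ was discovered and from the out-edges of every other processed vertex. Hence, conditionally on the exploration history, the number of $G_{sh}$-out-neighbours of $w$ equals $d^{out}_G(w)\,\mathbf 1\{d^{out}_G(w)\le d^\star(C)\}$ with $d^{out}_G(w)\sim\mathrm{Bin}(n-1,\lambda/n)$, and these counts are i.i.d.\ across processed vertices. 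Writing $X_n:=D_n\,\mathbf 1\{D_n\le d^\star(C)\}$ with $D_n\sim\mathrm{Bin}(n-1,\lambda/n)$, the offspring counts are i.i.d.\ copies of $X_n$, and since $D_n\Rightarrow D\sim\mathrm{Pois}(\lambda)$ and $X_n\le d^\star(C)$ is bounded, $\mathbb E[X_n]\to\rho_{out}<1$.

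Next I would dominate the explored set by a branching total progeny. Because each processed vertex contributes at most its active out-degree many genuinely new vertices (collisions with already-discovered vertices only lower the count), the explored set is stochastically dominated by the total progeny $T^{(k_n)}$ of a Galton--Watson process started from $k_n$ individuals with offspring law $X_n$. Fixing $n$ large enough that $m_n:=\mathbb E[X_n]\le\rho'$ for some $\rho'\in(\rho_{out},1)$, I would represent $T^{(k_n)}$ through its exploration walk $W_t:=k_n+\sum_{i=1}^t (X_i-1)$, where the $X_i$ are i.i.d.\ copies of $X_n$ and $T^{(k_n)}$ is the first time $W_t$ hits $0$. Since the walk stays positive until it is absorbed, this gives the clean inclusion
\[
\{\,T^{(k_n)}>s\,\}\subseteq\{\,W_s>0\,\}=\Big\{\textstyle\sum_{i=1}^s X_i>s-k_n\Big\}.
\]

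The tail bound then comes from a Chernoff estimate on the bounded, negative-drift increments. Because $X_n\le d^\star(C)$, the cumulant $\Lambda_n(\theta):=\log\mathbb E[e^{\theta X_n}]$ is finite for every $\theta$ and obeys $\Lambda_n(\theta)=\theta m_n+O(\theta^2)$; as $m_n<1$ I can fix one small $\theta>0$, uniformly in large $n$, with $\eta:=\theta-\Lambda_n(\theta)>0$ bounded away from $0$. Then
\[
\mathbb P\big(T^{(k_n)}>s\big)\le e^{-\theta(s-k_n)}\,\mathbb E[e^{\theta X_n}]^{\,s}
=\exp\!\big(-\eta\,s+\theta k_n\big),
\]
and with $s=M(\log n)^2$ and $k_n=\lceil c\log n\rceil=O(\log n)$ the right side is $\exp(-\eta M(\log n)^2+O(\log n))\to 0$ for every fixed $M>0$, which is the claim.

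I expect the main obstacle to be making the domination coupling fully rigorous rather than the final Chernoff step, which is routine. Two points need care: the independence of the offspring counts must be justified from the directed-BFS bookkeeping (each vertex's out-edge set examined once and disjoint from all previously revealed edges), and the stochastic domination of the true exploration by the free Galton--Watson process must be argued via a coupling in which collisions only remove descendants. A secondary technicality is transferring the strict subcriticality from the limiting mean $\rho_{out}$ to the finite-$n$ means $m_n$, ensuring the drift gap $\eta$ is uniform over all large $n$; this follows because $X_n$ is uniformly bounded by $d^\star(C)$ and $m_n\to\rho_{out}<1$.
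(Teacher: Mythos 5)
Your proposal is correct and follows the same core route as the paper: a breadth-first forward exploration of $G_{sh}$ whose offspring counts are i.i.d.\ copies of the truncated binomial $K\stackrel d= \mathrm{Bin}(n-1,\lambda/n)\mathbf 1\{\mathrm{Bin}(n-1,\lambda/n)\le d^\star(C)\}$, followed by stochastic domination of the discovered set by the total progeny of a subcritical Galton--Watson process started from $k_n$ particles. The only place you diverge is the final tail estimate: the paper invokes an exponential tail for a single tree's progeny and then union-bounds over the $k_n$ seed trees, writing $k_n e^{-as}$, whereas you use the hitting-time representation $\{T^{(k_n)}>s\}\subseteq\{\sum_{i=1}^s X_i>s-k_n\}$ and a Chernoff bound on the bounded negative-drift increments, yielding $\exp(-\eta s+\theta k_n)\to 0$. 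Your version is actually the tighter treatment of this step: the event that the \emph{total} progeny from $k_n$ roots exceeds $s$ does not imply that some single tree exceeds $s$ (only that some tree exceeds $s/k_n$, by pigeonhole), so the paper's displayed union bound controls a smaller event than intended, while your random-walk inclusion handles the aggregation over seeds exactly. Your attention to the uniformity of the drift gap $\eta$ over large $n$, justified by the uniform bound $X_n\le d^\star(C)$ and $m_n\to\rho_{out}<1$, matches the role played by the paper's observation that $K$ is uniformly bounded.
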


\begin{proof}
Fix $M>0$ and set $s:=M(\log n)^2$. Explore $\mathrm{Reach}^+(S_0;G_{sh})$ by breadth-first search (BFS) in $G_{sh}$, revealing out-edges of newly discovered vertices as they are explored, and stopping the exploration if the discovered set size reaches $s$. For each tail $u$, the out-edge indicators $\{\mathbf 1\{(u\to v)\in E(G)\}:v\neq u\}$ are independent Bernoulli$(\lambda/n)$, independent across distinct tails. Moreover, in $G_{sh}$ all out-edges from $u$ are retained iff $d^{out}_G(u)\le d^\star(C)$ and otherwise none are retained. Hence, the number of out-edges revealed from a newly explored vertex in $G_{sh}$ has distribution $K$ with
\[
K\stackrel d= \mathrm{Bin}(n-1,\lambda/n)\mathbf 1\{\mathrm{Bin}(n-1,\lambda/n)\le d^\star(C)\},
\]
and these $K$'s are independent across explored tails. While the discovered set has size at most $s$, each revealed out-edge chooses a destination uniformly from $[n]\setminus\{u\}$; the number of \emph{new} vertices found is at most the number of revealed out-edges. Consequently, the BFS discovered-set size is stochastically dominated by the total population size of a Galton--Watson process with offspring distribution $K$. Since $d^\star(C)$ is fixed once $C$ is fixed, $K$ is uniformly bounded by $d^\star(C)$, and
\[
\mathbb{E}[K]\to \mathbb{E}[X]=\rho_{out}<1.
\]
Therefore the associated Galton--Watson total progeny has an exponential tail: there exists $a>0$ and $n_0$ such that for all $n\ge n_0$ and all $m\ge 1$,
\[
\mathbb{P}(\text{GW total progeny started from one particle}\ge m)\le e^{-a m}.
\]
Starting from $|S_0|=k_n=\lceil c\log n\rceil$ initial particles and using a union bound over the $k_n$ independent GW trees gives
\[
\mathbb{P}(\text{GW total progeny started from $k_n$ particles}\ge s)\le k_n\,e^{-a s}.
\]
With $s=M(\log n)^2$, the right-hand side tends to $0$. Since the BFS discovered set is dominated by this GW total progeny, the same bound holds for $|\mathrm{Reach}^+(S_0;G_{sh})|$, proving the claim.
\end{proof}

\begin{theorem}[Random-shock subcriticality]\label{thm:subcrit_full_rand}
Assume $\rho_{out}<1$. Then
\[
\mathbb{P}(\mathcal F_n^{rand})\to 0.
\]
Moreover, with probability $1-o(1)$, the cascade contains no multi-hit defaults and
\[
D_\infty(S_0)=\mathrm{Reach}^+(S_0;G_{sh}).
\]
\end{theorem}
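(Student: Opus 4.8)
The plan is to sandwich the true cascade between the single-hit reachable set $R := \mathrm{Reach}^+(S_0;G_{sh})$ and itself, i.e.\ to prove that $D_\infty(S_0)=R$ with probability $1-o(1)$, and then to invoke Lemma~\ref{lem:subcrit_reach_polylog} to conclude $|D_\infty(S_0)|=O((\log n)^2)=o(\varepsilon n)$. One inclusion is deterministic: every edge of $G_{sh}$ emanates from an active node, so by Definitions~\ref{def:dstar} and~\ref{def:Gsh} it carries weight $w_{u\to v}=L/d^{out}_G(u)\ge L/d^\star(C)\ge E$; hence any vertex reached from $S_0$ along $G_{sh}$-edges defaults under the dynamics~\eqref{eq:cascade}, giving $R\subseteq D_\infty(S_0)$ always. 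The content is the reverse inclusion $D_\infty(S_0)\subseteq R$, which can fail only when a \emph{multi-hit} default occurs.

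I would establish the reverse inclusion by a deferred-decisions exploration of the true cascade. Process defaulted vertices one at a time; when a vertex $u$ is first marked defaulted, reveal its entire out-edge collection $\{\mathbf 1\{(u\to v)\in E(G)\}\}_{v\neq u}$, which is fresh (independent of everything revealed so far) because each tail is processed exactly once and edges are indexed by their tail. Revealing all of $u$'s out-edges at once determines $d^{out}_G(u)$, hence whether $u$ is active, hence the weight on each out-edge. Now consider the first vertex $v$ that the cascade would add outside $R$; up to that point the default set is contained in $R$, so $v$'s defaulting in-edges all originate in $R$. If any such in-edge came from an active tail $u$, then $(u\to v)\in E(G_{sh})$ with $u\in R$ would force $v\in \mathrm{Reach}^+(S_0;G_{sh})=R$, a contradiction; hence all of $v$'s defaulting in-edges are from inactive tails, each carrying weight $L/d^{out}_G(u)<L/(d^\star(C)+1)\le E$, so $v$ must receive at least two in-edges from the current default set. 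Thus every multi-hit default forces a \emph{fan-in}: some target accumulates $\ge 2$ in-edges from the explored default set.

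It then suffices to bound the probability of any such fan-in while the explored set stays polylogarithmic. I would run the exploration with a cap at size $s:=M(\log n)^2$ and control the total number $T$ of revealed out-edges: $T=\sum_{u\in D}d^{out}_G(u)$ is a sum of at most $s$ independent Binomial$(n-1,\lambda/n)$ terms, so $T\le C's$ with probability $1-o(1)$ by a Chernoff estimate. Conditioned on $T\le C's$, each revealed edge chooses its destination uniformly on $[n]\setminus\{u\}$, so a birthday-type union bound shows that the probability that two revealed edges share a destination is at most $T^2/(2(n-1))=O(s^2/n)=O((\log n)^4/n)\to 0$. On the complementary good event there is no fan-in, hence no multi-hit default, so the true cascade coincides step-by-step with the single-hit BFS; by Lemma~\ref{lem:subcrit_reach_polylog} the latter has size $\le s$ with probability $1-o(1)$, so the exploration terminates at $D_\infty(S_0)=R$ before reaching the cap. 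Combining, with probability $1-o(1)$ we get $D_\infty(S_0)=\mathrm{Reach}^+(S_0;G_{sh})$ and $|D_\infty(S_0)|=O((\log n)^2)<\varepsilon n$ for all large $n$, which yields both the no-multi-hit statement and $\mathbb{P}(\mathcal F_n^{rand})\to 0$ for the event of Definition~\ref{def:Fn_rand}.

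The main obstacle is the dependence between $R$ and the ``extra'' edges responsible for fan-in. The edges certifying which vertices are active — and hence which in-edges are individually sub-critical — are the very edges that determine $R$, so one cannot simply condition on $R$ and treat the fan-in edges as fresh Bernoullis. The deferred-decisions coupling is designed precisely to sidestep this: by revealing each tail's out-edges exactly once, at the moment that tail defaults, every edge is used with its correct (fresh) law and the activity status is read off consistently, so the fan-in estimate applies directly to the true cascade rather than to an idealized single-hit process. The only auxiliary care needed is the exponential-tail control of the random edge count $T$, which is routine once $d^\star(C)$ is fixed and out-degrees are Binomial$(n-1,\lambda/n)$.
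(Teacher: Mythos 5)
Your proposal is correct and follows essentially the same route as the paper: a deferred-decisions exploration that reveals each defaulted tail's out-edges exactly once, an $O(s^2/n)$ union bound showing that no target accumulates two hits from a polylogarithmic explored set, and an appeal to Lemma~\ref{lem:subcrit_reach_polylog} to conclude. If anything, your version is slightly more careful in two spots: you cap the exploration at $s$ rather than conditioning on $\{|D_\infty(S_0)|\le s\}$ before the no-multi-hit event has been established, and your birthday bound over all revealed edges excludes multi-hit accumulation across different cascade generations, which the paper's per-generation bound on $\{Y_{t,v}\ge 2\}$ does not literally cover.
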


\begin{proof}[Proof (deferred-decisions filtration)]
Fix $M>0$ and set $s:=M(\log n)^2$. By Lemma~\ref{lem:subcrit_reach_polylog},
\begin{equation}\label{eq:reach_small}
\mathbb{P}\Big(|\mathrm{Reach}^+(S_0;G_{sh})|\le s\Big)\to 1.
\end{equation}
Let $(D_t)_{t\ge 0}$ be the cascade and $\Delta_t:=D_t\setminus D_{t-1}$. Define $\mathcal F_t$ as the sigma-field generated by $(D_0,\dots,D_t)$ and all edge indicators with tails in $D_{t-1}$. In particular, prior to time $t$ no edge indicators with tails in $\Delta_t$ have been revealed. Conditional on $\mathcal F_t$, the indicators
\[
\{\mathbf 1\{(u\to v)\in E(G)\}:u\in\Delta_t,\ v\notin D_t\}
\]
are independent Bernoulli$(\lambda/n)$. For $v\notin D_t$, let
\[
Y_{t,v}:=\#\{u\in\Delta_t:(u\to v)\in E(G)\}.
\]
Then $Y_{t,v}\sim\mathrm{Bin}(|\Delta_t|,\lambda/n)$ conditionally, and
\[
\mathbb{P}(Y_{t,v}\ge 2\mid\mathcal F_t)\le \binom{|\Delta_t|}{2}(\lambda/n)^2.
\]
Summing over $v\notin D_t$ yields
\[
\mathbb{P}(\exists v\notin D_t:\ Y_{t,v}\ge 2\mid\mathcal F_t) \le \frac{\lambda^2|\Delta_t|^2}{n}.
\]
On $\{|D_\infty(S_0)|\le s\}$, $\sum_t|\Delta_t|^2\le s^2$, hence a union bound gives probability $O(s^2/n)=o(1)$ of any multi-hit default. On the intersection of the event in \eqref{eq:reach_small} and the no-multi-hit event, contagion proceeds only by single-hit transmissions along edges from active senders, so the cascade coincides with single-hit propagation:
\[
D_\infty(S_0)=\mathrm{Reach}^+(S_0;G_{sh}).
\]
In particular, $|D_\infty(S_0)|\le s=o(n)$ whp, and therefore
\[
\mathbb{P}(\mathcal F_n^{rand})\to 0.
\]
\end{proof}

\section{Supercritical Regime}

\begin{lemma}[Distributional identification of $G_{sh}$]\label{lem:gsh_identification}
$G_{sh}$ has the same law as an i.i.d.-outdegree digraph with
\[
K\stackrel d= \mathrm{Bin}(n-1,\lambda/n)\mathbf 1\{\mathrm{Bin}(n-1,\lambda/n)\le d^\star(C)\}.
\]
\end{lemma}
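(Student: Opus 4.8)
The plan is to verify directly the two defining features of an i.i.d.-outdegree digraph with uniform destinations: first, that the out-neighborhoods $N^+_{G_{sh}}(u)$ are mutually independent across $u\in[n]$; and second, that each $N^+_{G_{sh}}(u)$ has the correct marginal law, namely out-degree distributed as $K$ with the destination set uniform among subsets of $[n]\setminus\{u\}$ of the realized size. Once both are established, the joint law of $G_{sh}$ is pinned down and coincides with the target model.

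For independence, I would observe that whether $u$ is active and which of its edges are retained in $G_{sh}$ are both measurable functions of the single block of indicators $\{\mathbf 1\{(u\to v)\in E(G)\}:v\neq u\}$: activity is the event $\{d^{out}_G(u)\le d^\star(C)\}$, and on that event every $G$-out-edge of $u$ is kept, while on its complement none is. In $G(n,\lambda/n)$ these blocks are independent across distinct tails $u$, so the out-neighborhoods $N^+_{G_{sh}}(u)$ inherit mutual independence.

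For the marginal, note that $d^{out}_{G_{sh}}(u)=d^{out}_G(u)\,\mathbf 1\{d^{out}_G(u)\le d^\star(C)\}$, and since $d^{out}_G(u)\sim\mathrm{Bin}(n-1,\lambda/n)$ this has exactly the law of $K$. For the destination set, I would use exchangeability of the i.i.d.\ Bernoulli$(\lambda/n)$ out-edge indicators: conditional on $d^{out}_G(u)=d$, the neighborhood $N^+_G(u)$ is uniform over the $\binom{n-1}{d}$ size-$d$ subsets of $[n]\setminus\{u\}$. On $\{d^{out}_G(u)\le d^\star(C)\}$ we have $N^+_{G_{sh}}(u)=N^+_G(u)$, and any realized out-degree $d\ge 1$ in $G_{sh}$ forces $d^{out}_G(u)=d\le d^\star(C)$; hence conditional on $d^{out}_{G_{sh}}(u)=d\ge1$ the destinations are uniform over size-$d$ subsets, as required.

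The only point demanding care --- and the step I expect to be the crux of an otherwise routine argument --- is that the truncation must not bias \emph{which} destinations are selected, only \emph{how many}. This holds precisely because activity is a function of the out-degree alone: the indicator $\mathbf 1\{d^{out}_G(u)\le d^\star(C)\}$ is $\sigma(d^{out}_G(u))$-measurable and therefore conditionally independent of the neighbor identities given the degree, so zeroing out high-degree vertices reweights the degree to $K$ without disturbing the uniform conditional law of destinations. Assembling the independence across tails with the per-vertex marginal then yields the claimed identification.
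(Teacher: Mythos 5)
Your proposal is correct and follows essentially the same route as the paper's (much terser) proof: independence of the out-edge blocks across tails, the all-or-nothing retention rule being a function of $d^{out}_G(u)$ alone, and exchangeability giving uniform destinations conditional on the degree. Your added observation that activity is $\sigma(d^{out}_G(u))$-measurable and hence cannot bias \emph{which} destinations are kept is exactly the point the paper leaves implicit.
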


\begin{proof}
Outgoing edge families in $G(n,\lambda/n)$ are independent across tails. Conditional on $d^{out}_G(u)=k$, the out-neighbors of $u$ are uniform. Sender-truncation retains all $k$ edges if $k\le d^\star(C)$ and none otherwise, independently across $u$.
\end{proof}

\begin{theorem}[Bow-tie / strong-giant structure for $G_{sh}$]\label{thm:supercritical_bowtie}
Assume $\rho_{out}>1$. Then there exist constants
\[
\alpha_{in},\alpha_{out},\alpha_{scc}\in(0,1)
\]
and random vertex sets
\[
\mathcal I_n,\ \mathcal O_n,\ \mathcal C_n\subset[n]
\]
such that with probability $1-o(1)$:
\begin{enumerate}[label=(\roman*)]
\item $|\mathcal I_n|\ge\alpha_{in}n$, $|\mathcal O_n|\ge\alpha_{out}n$, and $|\mathcal C_n|\ge\alpha_{scc}n$;
\item $\mathcal C_n$ is strongly connected in $G_{sh}$;
\item every $v\in\mathcal I_n$ has a directed path to $\mathcal C_n$ in $G_{sh}$;
\item every $u\in\mathcal C_n$ has directed paths to all vertices in $\mathcal O_n$.
\end{enumerate}
In particular, for every $v\in\mathcal I_n$,
\[
\mathrm{Reach}^+(v;G_{sh})\supseteq\mathcal O_n.
\]
\end{theorem}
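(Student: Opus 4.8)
The plan is to reduce the theorem entirely to the strong-giant/bow-tie result of Penrose (2014) via the distributional identification already in hand. By Lemma~\ref{lem:gsh_identification}, $G_{sh}$ has exactly the law of an i.i.d.-outdegree random digraph on $[n]$ with uniform destinations and out-degree law
\[
K \stackrel{d}{=} \mathrm{Bin}(n-1,\lambda/n)\,\mathbf{1}\{\mathrm{Bin}(n-1,\lambda/n)\le d^\star(C)\}.
\]
This is precisely the model to which Penrose's theorem applies, so the argument consists of three steps: (a) checking the supercriticality hypotheses, (b) quoting the conclusion to obtain $\mathcal{C}_n,\mathcal{I}_n,\mathcal{O}_n$, and (c) concatenating paths to get the final reachability statement.

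First I would verify the two branching conditions. The forward exploration of $G_{sh}$ (following out-edges) has offspring distribution $K$; since $K$ is supported on $\{0,\dots,d^\star(C)\}$ and $\mathrm{Bin}(n-1,\lambda/n)\Rightarrow\mathrm{Pois}(\lambda)=D$, bounded convergence gives
\[
\mathbb{E}[K]\to\mathbb{E}[D\,\mathbf{1}\{D\le d^\star(C)\}]=\mathbb{E}[X]=\rho_{out}>1,
\]
so the forward process is supercritical for all large $n$. For the backward exploration (following in-edges) I would count edges: $G_{sh}$ has $\sum_u K_u\approx n\rho_{out}$ edges, each landing on a uniform destination, so the in-degree of a fixed vertex is asymptotically $\mathrm{Pois}(\rho_{out})$ with mean tending to $\rho_{out}$, and conditioning on an out-edge of a discovered in-neighbor does not affect that vertex's in-edges. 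Hence the backward branching mean is also $\rho_{out}>1$ and the backward process is supercritical. The uniform boundedness of $K$ makes all moment hypotheses in Penrose's theorem automatic.

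With both processes supercritical, Penrose (2014) yields a strong giant: a strongly connected set $\mathcal{C}_n$ of size $\ge\alpha_{scc}n$ whp, which is (ii). Taking $\mathcal{I}_n$ to be the fan-in (vertices with a directed path into $\mathcal{C}_n$) and $\mathcal{O}_n$ the fan-out (vertices reachable from $\mathcal{C}_n$) delivers the linear lower bounds in (i) and the path properties (iii)--(iv) by definition. The ``in particular'' claim then follows by concatenation: for $v\in\mathcal{I}_n$, (iii) gives a path from $v$ to some $c\in\mathcal{C}_n$; (ii) connects $c$ to any $c'\in\mathcal{C}_n$; and (iv) gives a path from $c'$ to each $w\in\mathcal{O}_n$. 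Hence $\mathcal{O}_n\subseteq\mathrm{Reach}^+(v;G_{sh})$.

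The main obstacle I anticipate is not the combinatorics but matching hypotheses: Penrose's theorem is cleanest for a \emph{fixed} out-degree law, whereas here the law $K=K_n$ varies with $n$ (a truncated binomial converging to the truncated Poisson $X$). I would address this by invoking the version of the result for a convergent sequence of out-degree distributions, using that $K_n\Rightarrow X$ with uniformly bounded support and $\mathbb{E}[K_n]\to\rho_{out}>1$; the bounded support is exactly what makes the passage to the limit and the verification of both branching means uniform in $n$. The only genuinely nontrivial check is that the \emph{backward} (in-degree) process is supercritical, which the edge-counting argument above supplies.
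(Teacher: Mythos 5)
Your proposal follows exactly the route the paper intends: the paper offers no explicit proof of Theorem~\ref{thm:supercritical_bowtie} beyond Lemma~\ref{lem:gsh_identification} and the appeal to Penrose (2014) announced in the abstract and roadmap, and your argument is precisely that reduction. You in fact supply details the paper leaves implicit --- verifying the forward and backward branching means, flagging the $n$-dependence of the out-degree law $K_n$ and its resolution via bounded support, and the path-concatenation for the final reachability claim --- all of which are correct and strengthen rather than depart from the paper's approach.
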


\begin{corollary}[Random shocks trigger systemic events]\label{cor:Fnrand_supercritical}
Assume $\rho_{out}>1$. Fix $\varepsilon\in(0,\alpha_{out})$ and any $c>0$. Then
\[
\mathbb P(\mathcal F_n^{rand})\to 1.
\]
\end{corollary}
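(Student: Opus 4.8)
The plan is to combine the bow-tie structure of Theorem~\ref{thm:supercritical_bowtie} with a simple hitting estimate that exploits the independence of the shock $S_0$ from the graph $G$. The structural input I would establish first is the containment $\mathrm{Reach}^+(A;G_{sh})\subseteq D_\infty(A)$ for any seed $A\subseteq[n]$: if $(u\to v)\in E(G_{sh})$ then $u$ is active, so by Definition~\ref{def:dstar} the exposure satisfies $w_{u\to v}=L/d^{out}_G(u)\ge L/d^\star(C)\ge E$, whence the default of $u$ alone pushes $v$ over the threshold in one step of the cascade recursion~\eqref{eq:cascade}. Iterating along any directed $G_{sh}$-path from $A$ shows that every vertex forward-reachable from $A$ in $G_{sh}$ eventually defaults; combined with the evident monotonicity of $D_\infty$ in its seed set, this gives the lower bound I will need.

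Next I would invoke Theorem~\ref{thm:supercritical_bowtie} to fix the bow-tie event $\mathcal B_n$, on which $|\mathcal I_n|\ge\alpha_{in}n$, $|\mathcal O_n|\ge\alpha_{out}n$, and $\mathrm{Reach}^+(v;G_{sh})\supseteq\mathcal O_n$ for every $v\in\mathcal I_n$; by hypothesis $\mathbb P(\mathcal B_n)\to 1$. On $\mathcal B_n$ it then suffices to show that the shock hits the in-component $\mathcal I_n$: if some $v^\star\in S_0\cap\mathcal I_n$, then monotonicity and the containment above give
\[
D_\infty(S_0)\supseteq D_\infty(\{v^\star\})\supseteq\mathrm{Reach}^+(v^\star;G_{sh})\supseteq\mathcal O_n,
\]
so that $|D_\infty(S_0)|\ge\alpha_{out}n>\varepsilon n$ because $\varepsilon<\alpha_{out}$, i.e.\ the event $\mathcal F_n^{rand}$ occurs.

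The hitting estimate is where the independence is used. Since $S_0$ is uniform among $k_n$-subsets of $[n]$ and independent of $G$, I would condition on $G$ (which freezes the random set $\mathcal I_n$) and bound, on $\mathcal B_n$,
\[
\mathbb P\big(S_0\cap\mathcal I_n=\emptyset \mid G\big)=\frac{\binom{n-|\mathcal I_n|}{k_n}}{\binom{n}{k_n}}\le\Big(1-\tfrac{|\mathcal I_n|}{n}\Big)^{k_n}\le(1-\alpha_{in})^{k_n}\le n^{-c\alpha_{in}},
\]
using $k_n=\lceil c\log n\rceil$ in the last step. Taking expectations over $G$ and combining with $\mathbb P(\mathcal B_n)\to1$ yields $\mathbb P(\mathcal B_n\cap\{S_0\cap\mathcal I_n\neq\emptyset\})\to1$, and the previous paragraph then gives $\mathbb P(\mathcal F_n^{rand})\to1$.

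I expect the only genuine subtlety to be the bookkeeping around the conditioning: $\mathcal I_n$ is a $G$-measurable random set, so the hitting probability must be computed conditionally on $G$ and only afterward averaged, with the elementary inequality $\binom{n-m}{k}/\binom{n}{k}\le(1-m/n)^k$ applied with $m=|\mathcal I_n|$ on the event $\mathcal B_n$ where $m\ge\alpha_{in}n$. Everything else — the single-hit containment, the monotonicity of $D_\infty$, and the union over the two high-probability events — is routine.
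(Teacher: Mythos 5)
Your proof is correct. The paper states Corollary~\ref{cor:Fnrand_supercritical} without giving a proof, and your argument supplies exactly the intended chain of reasoning: the single-hit containment $\mathrm{Reach}^+(A;G_{sh})\subseteq D_\infty(A)$ (which follows from $w_{u\to v}=L/d^{out}_G(u)\ge L/d^\star(C)\ge E$ for active $u$, together with monotonicity of the cascade), the bow-tie structure of Theorem~\ref{thm:supercritical_bowtie}, and the conditional hitting estimate $\binom{n-|\mathcal I_n|}{k_n}/\binom{n}{k_n}\le(1-\alpha_{in})^{k_n}\le n^{-c\alpha_{in}}$ exploiting the independence of $S_0$ from $G$; your care in conditioning on $G$ before bounding the hitting probability is exactly the right bookkeeping.
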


% -------------------------
% Non-monotonicity remark (Addition 5)
% -------------------------
\paragraph{Remark on non-monotonicity.}
Although the results exhibit sharp transitions, standard monotone sharp-threshold machinery does not directly apply. Adding outgoing edges to a node increases diversification but simultaneously reduces per-edge exposure, so the occurrence of systemic events is not monotone in the edge set. The analysis therefore proceeds by isolating a monotone substructure ($G_{sh}$) rather than appealing to global monotonicity.

% -------------------------
% Scope and Limitations (Addition 3)
% -------------------------
\section*{Scope and Limitations}
The results are derived for a sparse directed Erdős--Rényi network with homogeneous liabilities, equity, and zero recovery, under the specific exposure-splitting rule defined in Section~2. The analysis isolates single-hit contagion and characterizes systemic events arising from shocks of logarithmic size. The paper does not address heterogeneous balance sheets, alternative recovery rules, correlated exposures, time-varying networks, or cascade mechanisms requiring coordinated multi-hit accumulation at macroscopic scales. Conclusions should therefore be interpreted strictly within the stated model and assumptions.

\bibliographystyle{plain}

\begin{thebibliography}{10}
\bibitem{Penrose2014}
M.~D. Penrose.
\newblock The strong giant in a random digraph.
\newblock {\em Random Structures \& Algorithms}, 45(1):1--31, 2014.
\end{thebibliography}

\end{document}